\documentclass[a4paper,oneside,reqno]{article}
\usepackage[english]{babel}

\parindent=0pt

\sloppy
\newcommand{\bib}{bib}
\usepackage{cvpr-geom}
\usepackage{cvpr-abbriv}
\usepackage[alwaysadjust,flushright]{paralist}
\usepackage[pagebackref=false,breaklinks=true,a4paper,colorlinks=false,bookmarks=false,baseurl=cmp.felk.cvut.cz/~shekhovt, hyperfigures=true,linkbordercolor={1 1 1},pdfcenterwindow=true,pdffitwindow=true,pdfstartview=FitH,pdfstartpage=1]{hyperref}
\usepackage{alg}
\usepackage{amsthm}
\usepackage{epsfig}
\usepackage{graphicx}
\usepackage{amsmath}
\usepackage{amssymb}

\usepackage{amsfonts}
\usepackage{verbatim}
\usepackage{color}
\usepackage{epsfig}

\DeclareMathOperator{\conv}{conv}

\newcommand{\<}{\langle}
\renewcommand{\>}{\rangle}

\renewcommand{\mid}{\:|\,}
\newcommand{\bb}{\boldsymbol}

\newcommand{\Real}{\mathbb{R}}
\newcommand{\Natural}{\mathbb{N}}

\newcommand{\tab}{{\hphantom{bla}}}
\newcommand{\A}{\mathcal{A}}

\newcommand{\V}{\mathcal{V}}

\newcommand{\E}{\mathcal{E}}

\renewcommand{\L}{\mathcal{L}}
\newcommand{\LL}{\bb{\mathcal{L}}}

\newcommand{\st}{\mbox{s.t. }}

\newcounter{myRomanCounter}

\newcommand{\f}{{\bf f}}

\newcommand{\g}{{\bf g}}
\newcommand{\h}{{\bf h}}

\newtheorem{theorem}{Theorem}

\newtheorem{corollary}{Corollary}
\newtheorem{statement}{Statement}

\theoremstyle{definition}
\newtheorem{definition}{Definition}

\newcommand{\VH}{V\'aclav Hlav\'a\v c}
\newcommand{\AS}{Alexander Shekhovtsov}

\newcommand*{\mypar}[1]{\subsection{#1}}

\newcommand{\K}{{\bb K}}
\def\leftbb{\mathopen{\rlap{$[$}\hskip1.3pt[}}
\def\rightbb{\mathclose{\rlap{$]$}\hskip1.3pt]}}
\def\barvee{\veebar}
\newcommand{\IF}{\mbox{ \rm if }}

\newcommand{\acknowlegment}{\footnote{The work was supported bu EU projects FP7-ICT-247870 NIFTi and FP7-ICT-247525 HUMAVIPS and the Czech project 1M0567 CAK.}\ }																			
\begin{document}
\twocolumn[\parbox{\linewidth}{
\begin{center}
{\sc \huge 
On Partial Opimality by Auxiliary\\[.5cm] Submodular Problems}\\[1cm]
\setlength{\tabcolsep}{1cm}
\begin{tabular}{cc}
\begin{tabular}[t]{l}
\bf \AS \\
\ Msc., engineer \\
\ shekhole@fel.cvut.cz
\end{tabular}
&
\begin{tabular}[t]{l}
\bf \VH \\
\ Prof., head of the \\
\ Center for Machine Perception \\
\ hlavac@fel.cvut.cz
\end{tabular}\\
\end{tabular}\\[1cm]
Center for Machine Perception, Department of Cybernetics\\
Faculty of Electrical Engineering, Czech Technical University in Prague\\
Technicka 2, 166 27 Prague 6, Czech Republic\\
\end{center}
\begin{abstract}

%
\par
In this work, we prove several relations between three different energy minimization techniques. A recently proposed methods for determining a provably optimal partial assignment of variables by Ivan Kovtun (IK), the linear programming relaxation approach (LP) and the popular expansion move algorithm by Yuri Boykov. 
We propose a novel sufficient condition of optimal partial assignment, which is based on LP relaxation and called LP-autarky. We show that methods of Kovtun, which build auxiliary submodular problems, fulfill this sufficient condition. 
The following link is thus established: 
LP relaxation cannot be tightened by IK. For non-submodular problems this is a non-trivial result. In the case of two labels, LP relaxation provides optimal partial assignment, known as persistency, which, as we show, dominates IK.
Relating IK with expansion move, we show that the set of fixed points of expansion move with any ``truncation'' rule for the initial problem and the problem restricted by one-vs-all method of IK would coincide -- i.e. expansion move cannot be improved by this method. In the case of two labels, expansion move with a particular truncation rule coincide with one-vs-all method.

\end{abstract}
\vskip0.5cm
\par
{\bf Keywords:} energy minimization, partial optimality, persistency, max-sum, WCSP, MRF, autarky, LP-relaxation, expansion move.
\vskip0.5cm
}]


\section{Introduction}
\mypar{Energy Minimization}
In this work\acknowlegment we consider minimization problem of the following form:
\begin{equation}\label{energy minimization}
\min_{x\in\LL}\left[ f_0 + \sum_{s\in \V} f_s(x_s) +\sum_{st\in \E}f_{st}(x_{st}) \right] = \min_{x\in\LL} \f(x).
\end{equation}
Here, $\V$ is a finite set and $\E\subset \V\times\V$. A concatenated vector of all variables $x = (x_s| s\in \V)$ is called a {\em labeling}. Variable $x_s$ takes its values in a discrete {\em domain} $\L_s$, called {\em labels}. Labeling $x$ takes values in $\LL$, the Cartesian product of all domains $\L_s$. In this paper all $\L_s$ will have the same number of labels, but may have different associated orderings, \etc. Notation $st$ denotes the ordered pair $(s,t)$ and $x_{st}$ denotes the pair of corresponding variables, $(x_s,x_t)$.
The objective is composed of term $f_0\in \Real$ and functions $f_s \colon \L_s\to \Real$ and $f_{st}\colon \L_s\times\L_t\to\Real$.
\par
The problem~\eqref{energy minimization} is considered in several fields. It is also known as the labeling problem, the Weighted Constraint Satisfaction (WCSP) and for the case of two labels ($|L_s|=2, \forall s$) as the pseudo-Boolean\footnote{Variables $x_s\in\{0,1\}$ are regarded as Boolean in this case and ``pseudo'' emphasize that a real-valued rather than Boolean function of these variables is considered.} optimization~\cite{BorosHammer02}. Our terminology comes from considering probabilistic models in the form of Gibbs distribution.
There is certain difference between problems with two labels and more than two labels, the later will be referred to as {\em multi-label} problems.
%
%
%
\mypar{Partial Optimality}
Energy minimization~\eqref{energy minimization} is an NP-hard problem in general. Techniques which allow us to find a ``part of the optimal'' labeling are of our central interest here. The idea is that it may be possible to fix a part of variables to take certain labels such that  any optimal labeling will provably have the same partial assignment.
\par
More precisely, we consider a subset of variables $\A\subset \V$ and the assignment of labels over this subset $y = (y_s\mid s \in \A)$. The pair $(\A,y)$ is called a {\em strong optimal partial assignment} (strong persistency~\cite{Boros:TR06-probe}), if for any minimizer $x^*$ it holds $x^*_{\A}=y$, where notation $x^*_{\A}$ is the restriction of $x^*$ to $\A$, \ie $(x^*_s\mid s\in \A)$. Likewise, if there exist at least one minimizer $x^*$, for which $x^*_{\A}=y$ holds we say that $(\A,y)$ is a {\em weak optimal partial assignment}.
\par
Two or more strong optimal partial assignments can be combined together, because each of them preserves all optimal solutions. This is not true for weak assignments, even if they assign different variables, -- they may not share any globally optimal solutions in common. However, if we want to find a minimizer of~\eqref{energy minimization} (or at least ``localize'' it as much as possible), a weak optimal partial assignment could be more helpful -- the best one assigns all variables.
%
%
%
%
\mypar{Domain Constraints} The idea of optimal partial assignment naturally extends to constraining a variable to a subset of labels $K_s \subset \L_s$. Let $\A\subset \V$, let $K_s\subset \L_s$, $\forall s\in \A$. Let $\K$ be the Cartesian product of $K_s$, $s\in \A$. We say that a pair $(\A,\K)$ is a {strong (resp. weak) optimal constraint} if $x^*_\A \in \K$ for all (resp. at least one) minimizer $x^*$. This type of constraints is called {\em domain constraints}. Obviously, it includes partial assignment as a special case.
\mypar{Autarkies}
Some domain constraints follow from more specific properties called ``autarkies''. This term occurs in~\cite{Boros:TR06-probe} for two-label problems and we consider its extension~\cite{Shekhovtsov-07-binary-TR} to multi-label problems.
\par
Let $\L_s=\{0,1\dots L\}$ $\forall s\in\V$, $L\in \Natural$. Let $x,y\in \LL$. Define component-wise minimum and maximum of two labellings:
\begin{subequations}
\begin{align}
&(x \wedge y)_s = \min(x_s,y_s),\\
&(x \vee y)_s = \max(x_s,y_s).
\end{align}
\end{subequations}
A pair $(x^{\rm min}\in \LL, x^{\rm max}\in \LL)$ such that $x^{\rm min}\leq x^{\rm max}$ (component-wise) is called a {\em weak autarky} for problem~\eqref{energy minimization}, if
\begin{equation}\label{autarky}
\forall x\in \LL \tab \f((x \vee x^{\rm min}) \wedge x^{\rm max}) \leq \f(x).
\end{equation}
If additionally for any $x\neq (x \vee x^{\rm min}) \wedge x^{\rm max}$ strict inequality
\begin{equation}
\f((x \vee x^{\rm min}) \wedge x^{\rm max}) < \f(x)
\end{equation}
holds, then the autarky is called {\em strong}.
\par
The autarky provides domain constraints with $K_s = [x^{\rm min}_s,\dots, x^{\rm max}_s]$. For any minimizer $x^*$, we have that $\hat x = (x^* \vee x^{\rm min}) \wedge x^{\rm max}$ is a minimizer as well, and $\hat x_s \in K_s$. A strong autarky guarantees additionally that $x^*$ must itself satisfy $x^*_s \in K_s$. Indeed, if it was not true then $\hat x\neq x^*$ and $\f(\hat x)<\f(x^*)$, which is a contradiction. Therefore a weak (resp. strong) autarky provides a weak (resp. strong) domain constraint.
\par
Determining whether a given pair $(x^{\rm min}, x^{\rm max})$ is a strong autarky is an NP-hard decision problem~\cite{Boros:TR06-probe}.
\par
Autarkies can be combined together. A {\em join} of two autarkies $(x^1,x^2)$, $(y^1,y^2)$ is the pair $(x^1 \vee y^1, x^2 \wedge y^2)$. For strong autarkies, the result is a strong autarky and this operation is commutative, associative and idempotent, so that it defines a semi-lattice. 
\begin{proof}
From definition of autarkies, we have
\begin{equation}\label{join:l1}
\f( (((x \vee x^1)\wedge x^2) \vee y^1) \wedge y^2 ) \leq \f((x \vee x^1)\wedge x^2) \leq \f(x)
\end{equation}
Note, that for $x^1 \leq x^2$ we have $(x \vee x^1) \wedge x^2 = (x \wedge x^2) \vee x^1$.
We can rewrite the labeling in the left hand side (LHS) as follows
\begin{equation}
\begin{split}
&((x \vee x^1)\wedge x^2) \vee y^1) \wedge y^2 = \\
&((x \wedge x^2) \vee  (x^1 \vee y^1) ) \wedge y^2 \doteq\\
&\ \ (x \wedge (x^2 \wedge y^2) ) \vee  (x^1 \vee y^1),
\end{split}
\end{equation}
where doted equality holds if $y^2\geq x^1$. This is satisfied for strong autarkies, because it would be a contradiction that all optimal labellings are below $y^2$ and above $x^1$. 
\end{proof}
Thus there exists an autarky, which provides the maximal amount of domain constraints among strong autarkies. It is the join of all strong autarkies.
\par
It is also possible to join ``non-contradictive'' weak autarkies together, but let us leave it aside for now.
\par
We will consider a special cases of autarkies with ``one-side constraints'', of the form $(x^{\rm min},L)$ or $(0,x^{\rm max})$, where $L$ and $0$ represent the labeling with all components equal to $L$ (resp. 0). For such autarkies inequality~\eqref{autarky} simplifies, because $x\vee 0 = x$ and $x \wedge L=x$.
Methods~\cite{Kovtun03,Kovtun04PHD} compute strong autarkies of this form. 
By taking the join of two strong autarkies $(x^{\rm min},L)$ and $(0,x^{\rm max})$ we can obtain a strong autarky $(x^{\rm min},x^{\rm max})$. However, the reverse is not true: if $(x^{\rm min},x^{\rm max})$ is a strong autarky, it does not imply that $(x^{\rm min},L)$ or $(0,x^{\rm max})$ is an autarky. 
And it is the case that other methods (roof-dual~\cite{BorosHammer02} in the case of two-label problem and its multi-label extension~\cite{Shekhovtsov-07-binary-TR}) can find an autarky of the form $(x^{\rm min},x^{\rm max})$, which is not a join of two one-side autarkies. 

%
%
%
%
\mypar{Submodular Problems}
Function $\f$ is called {\em submodular} if
\begin{equation}
\forall x,y\in\LL\ \ \ \f(x\vee y)+\f(x\wedge y) \leq \f(x)+\f(y).
\end{equation}
In the case $\f$ is defined by~\eqref{energy minimization}, it is submodular iff (see \eg~\cite{Werner-PAMI07}) $\forall st\in \E,\ \forall x_{st},y_{st}\in \L_{st} = \L_s \times \L_t$
\begin{equation}\label{submodularity}
f_{st}(x_{st})+f_{st}(y_{st}) \geq f_{st}(x_{st} \wedge y_{st}) + f_{st}(x_{st} \vee y_{st}).
\end{equation}
Minimizing a pairwise submodular function reduces to {\sc mincut} problem~\cite{Ishikawa03},~\cite{DSchlesinger-K2}. Let $\f$ be submodular and $x^*$ be its minimizer. Then we have the following properties:
\begin{subequations}
\begin{align}
\f(x\vee x^*) \leq \f(x),\\
\f(x\wedge x^*) \leq \f(x).
\end{align}
\end{subequations}
They easily follow from submodularity, noting that $\f(x\vee x^*) \geq \f(x^*)$ and $\f(x\vee x^*) \geq \f(x^*)$.
So, in fact, any pair of optimal solutions $(x^{1*},x^{2*})$ is a weak autarky for this problem. Moreover, if we let
\begin{subequations}
\begin{align}
\label{lowest minimizer}
x^{\rm min} = \bigwedge \arg \min_x \f(x),\\
\label{highest minimizer}
x^{\rm max} = \bigvee \arg \min_x \f(x),
\end{align}
\end{subequations}
where $\arg \min$ is the set of minimizers, we see that both $x^{\rm min}$ and $x^{\rm max}$ are minimizers of $\f$ 
and that $(x^{\rm min}, x^{\rm max})$ is a strong autarky for $\f$. In fact, it is the join of all strong autarkies for $\f$. This strong autarky can be determined from a solution of the corresponding {\sc maxflow} problem.

\section{Approach by Kovtun}
In this section, we review techniques~\cite{Kovtun03,Kovtun04PHD} for building autarkies (and hence domain constrains) by constructing auxiliary problems. We take a somewhat different perspective on these results, however, our statements and proofs here are in a sense equivalent to ones given in~\cite{Kovtun03,Kovtun04PHD}. 
%
%
\begin{theorem}\label{T1}
Let $\f = \g + \h$, let $(x^{\rm min}, x^{\rm max})$ be a strong autarky for $\g$ and a weak autarky for $\h$. Then $(x^{\rm min}, x^{\rm max})$ is a strong autarky for $\f$.
\end{theorem}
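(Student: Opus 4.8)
The plan is to combine the two autarky hypotheses by simple addition. Fix an arbitrary $x\in\LL$ and write $\tilde x:=(x\vee x^{\rm min})\wedge x^{\rm max}$ for its clipped version. The key preliminary observation is that this clipping map depends only on the pair $(x^{\rm min},x^{\rm max})$ and not on the function being clipped, so ``the value of $\g$ at the clipping of $x$'', ``the value of $\h$ at the clipping of $x$'', and ``the value of $\f$ at the clipping of $x$'' all refer to evaluations at the same labeling $\tilde x$; in particular $\f(\tilde x)=\g(\tilde x)+\h(\tilde x)$.

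First I would record the two given facts evaluated at $x$. Since $(x^{\rm min},x^{\rm max})$ is a strong autarky for $\g$ it is in particular a weak autarky for $\g$, so $\g(\tilde x)\le \g(x)$. Since $(x^{\rm min},x^{\rm max})$ is a weak autarky for $\h$, so $\h(\tilde x)\le \h(x)$. Adding these two inequalities and using $\f=\g+\h$ gives $\f(\tilde x)\le \f(x)$, which is exactly the weak autarky property~\eqref{autarky} for $\f$, and since $x$ was arbitrary this holds for all $x\in\LL$.

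For the strong part, suppose in addition that $x\ne \tilde x$. Then the strong autarky property for $\g$ upgrades the first inequality to the strict one $\g(\tilde x)<\g(x)$, while $\h(\tilde x)\le\h(x)$ still holds; summing a strict inequality with a non-strict one yields $\f(\tilde x)<\f(x)$. Hence $(x^{\rm min},x^{\rm max})$ is a strong autarky for $\f$.

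There is essentially no genuine obstacle; the two points that deserve a line of care are (i) noting that the clipping map is function-independent, so the two hypotheses really are statements about the same argument $\tilde x$, and (ii) the elementary fact that adding a strict inequality to a non-strict one preserves strictness, which is precisely what allows the strong autarky for $\g$ to ``carry'' the combined conclusion even though $\h$ contributes only a weak autarky.
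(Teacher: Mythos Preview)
Your proof is correct and follows exactly the same approach as the paper: decompose $\f(\tilde x)=\g(\tilde x)+\h(\tilde x)$, apply the two autarky inequalities, add them, and note that strictness is inherited because one summand is strict. The paper's version is more terse (it omits your remarks (i) and (ii)), but the argument is identical.
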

\begin{proof}
We have
\begin{equation}
\begin{split}
&\f((x \vee x^{\rm min}) \wedge x^{\rm max})  = \\
&\g((x \vee x^{\rm min}) \wedge x^{\rm max})+\h((x \vee x^{\rm min}) \wedge x^{\rm max}) \leq \\
&\g(x) + \h(x),
\end{split}
\end{equation}
and the inequality is  strict if $(x^{\rm min}, x^{\rm max})$ is strong for either $\h$ or $\g$.
\end{proof}
The idea of auxiliary problems is to construct a submodular $\g$, for which, as we know, a strong autarky $(x^{\rm min},L)$ can be found by choosing $x^{\rm min}$ as the lowest minimizer of $\g$, given by~\eqref{lowest minimizer}. The trick is to find such $\g$ that $(x^{\rm min},L)$ is at the same time an autarky for $\h=\f-\g$. The following sufficient conditions were proposed~\cite{Kovtun04PHD}:

\begin{statement}
\label{stat:sufficient_st}
 Let $\h$ satisfy
\begin{subequations}\label{sufficient_st}
\begin{align}
\label{sufficient_st:a}
\begin{split}
& \forall s\in \V,\  x_s\in \L_s,\ \hat x_s \in K_s \\
& 
\tab h_s(x_s \vee \hat x_s) \leq h_s(\hat x_s) 
\end{split}
\\
\label{sufficient_st:b}
\begin{split}
& \forall st\in\E,\ x_{st}\in \L_{st},\  \hat x_{st} \in K_{st}\\
& h_{st}(x_{st} \vee \hat x_{st}) \leq h_{st}(x_{st}).\\
\end{split}
\end{align}
Then for any $x^{\rm min}$ such that $x^{\rm min}_{s}\in K_s$, the pair $(x^{\rm min}, L)$ is a weak autarky for $\h$. If additionally
\begin{equation}
\label{sufficient_st:c}
\begin{split}
& \forall s\in\V,\ \hat x_s\in K_s,\ x_s<\hat x_s\\
& h_s(x_s \vee \hat x_s) < h_s(\hat x_s),
\end{split}
\end{equation}
\end{subequations}
then $(x^{\rm min},L)$ is a strong autarky.
\end{statement}
\begin{proof}
For any $x\in \LL$, summing corresponding inequalities from~\eqref{sufficient_st:a} and~\eqref{sufficient_st:b}, we obtain
\begin{equation}
\begin{split}
\sum_{s}h_s(x_s\vee x^{\rm min}_s) + \sum_{st}h_{st}(x_{st}\vee x^{\rm min}_{st}) \leq \\
\sum_{s}h_s(x_s) + \sum_{st}h_{st}(x_{st}).
\end{split}
\end{equation}
If $x \vee x^{\rm min} \neq x$, then $\exists s\in\V\ x_s<x^{\rm min}_s$ and~\eqref{sufficient_st:c} implies strict inequality.
\end{proof}

Two practical methods were proposed~\cite{Kovtun04PHD} to construct $\g$ and $(K_s\mid s\in \V)$. We first describe a more general approach.
%
\par\noindent
\begin{myalgorithm}{\\Sequential construction of $\g$, $(K_s\ |\ s\in\V)$, \cite{Kovtun04PHD}}\label{IK1}
%
\item Start with $K_s=\emptyset$, $s\in \V$;
\item Find $\g$ such that $\h = \f-\g$ satisfies~\eqref{sufficient_st} and $\g$ satisfies submodularity constraints~\eqref{submodularity}.
\item Find $x^{\rm min} = \bigwedge \arg \min_x \g(x)$;
\item If $x^{\rm min}_s \in K_s$ for all $s\in \V$ then stop.
\item Set $K_s \leftarrow K_s \cup \{x^{\rm min}_s\}$\ \ $\forall s\in \V$ and go to step 2.
\end{myalgorithm}
In step 2 for each edge $st\in\E$ a system of linear inequalities in $g_{st}$ has to be solved. While~\cite{Kovtun04PHD} provides an explicit solution, for our consideration it will not be necessary. When the algorithm stops, $\g$ is submodular and $(x^{\rm min},L)$ is a strong autarky for $\g$ and a weak autarky for $\f-\g$. By Theorem~\ref{T1}, it is a strong autarky for $\f$. It may stop, however, with $x^{\rm min}_s=0$ for all $s$, so that efficiently no constraints are derived. Being a polynomial algorithm it cannot have a guarantee to simplify the problem~\eqref{energy minimization}.
\par
A simpler non-iterative method proposed in~\cite{Kovtun03} is shown in Algorithm~\ref{IK2}. It attempts to identify nodes $s$ where the label $L$ is better than any other label. The constructed auxiliary problem $\g$ has a property that 
\begin{figure}[!ht]
\begin{myalgorithm}{One vs all method, \cite{Kovtun03,Kovtun04PHD}}\label{one-vs-all}\label{IK2}
\item For each $s$ chose such ordering of $\L_s$ that $0 \in \arg \min\limits_{i\neq L} f_s(i)$.
\item Set $g_s = f_s$, $s\in\V$.
\item Set $K_s = \{0,L\}$.
\item Set $g_{st}(i,j) =
\begin{cases}
a_{st}, \tab i=L,\ j =  L, \\
b_{st}, \tab i= L,\ j\neq L, \\
c_{st}, \tab i\neq L,\ j= L, \\
d_{st}, \tab i\neq L,\ j\neq L,
\end{cases}
$\\
where $a_{st},b_{st},c_{st},d_{st}$ are such that $f_{st}-g_{st}$ satisfy \eqref{sufficient_st:b} and submodularity constraints.
One of the solutions is as follows:
\begin{equation}\label{one-vs-all-abcd}
\begin{split}
a_{st} &= f_{st}(L,L),\\
b_{st} &= \min\limits_{j\neq L}f_{st}(L,j),\\
c_{st} &= \min\limits_{i\neq L}f_{st}(i,L),\\
d_{st} &= \min \Big(b_{st}+c_{st}-a_{st}, \min\limits_{i\neq L, j\neq L}\Big[f_{st}(i,j)\\
&+\min\big\{ b_{st}-f_{st}(L,j), c_{st}-f_{st}(i,L)\big\} \Big]\Big).
\end{split}
\end{equation}
\end{myalgorithm}
\end{figure}
its lowest minimizer $x^{\rm min} = \bigwedge \arg \min\g(x)$ is guaranteed to satisfy $x^{\rm min} \in K_s$ $\forall s\in \V$. (because all costs $(g_{st}(i,j) \mid i<L,\, j<L)$  are equal and $g_s(0)\leq g_s(i)\ \forall s\in\V, \forall i<L$, see proof in~\cite{Kovtun03}). Therefore $(x^{\rm min},L)$ is a weak autarky for $\f-\g$ and Theorem~\eqref{T1} applies.
\par
Both methods allow us to choose various orderings of sets $\L_s$. Strong domain constraints derived from various orderings can be then combined. 

\section{LP-autarkies}
In this section we introduce a special subclass of autarkies, which preserve optimal solutions of the LP-relaxation. Unlike with general autarkies, the membership to this subclass is polynomially verifiable. We show that autarkies constructed by algorithms~\ref{IK1},\,\ref{IK2} belong to this subclass. This has useful implications for LP relaxation.
\mypar{LP Relaxation}
Let $\phi(x)$ be a vector with components $\phi(x)_0=1$, $\phi(x)_{s,i} = \leftbb x_s{=}i \rightbb$ and $\phi(x)_{st,ij} = \leftbb x_{st}{=}ij \rightbb$, where $\leftbb \cdot \rightbb$ is $1$ if the expression inside is true and $0$ otherwise. Let $f$ denote a vector with components $f_0$, $f_{s,i} = f_s(i)$ and $f_{st,ij} = f_{st}(ij)$. With respect to components of energy functions we will be using this index and parenthesis notations completely interchangeably. Let $\<\cdot,\cdot\>$ denote a scalar product. Then we can write energy minimization as
\begin{equation}
\min_{x\in \LL} \<f,\phi(x)\>.
\end{equation}
Its relaxation to a linear program is written as
\begin{equation}\label{lprelax}
\geq \min\limits_{\mu\in \Lambda} \<f,\mu\>,
\end{equation}
where $\Lambda$ is the {\em local polytope}. It approximates $\conv \{\phi(x)\mid x\in \LL\}$ from the outside, see \eg~\cite{Werner-PAMI07} for more detail. It is given by the linear constraints
\begin{equation}
\begin{array}{l}
\mu_0 = 1,\\
\mu_{s,i}\geq 0,\tab \mu_{st,ij}\geq 0, \\
\sum\limits_{ij\in \L_{st}}\mu_{st, ij}=1\ \tab \forall st\in\E,\\
\sum\limits_{j\in \L_t}\mu_{st, ij} = \mu_{s, i} \tab \forall i\in \L_s, \,  st\in \E,\\
\sum\limits_{i\in \L_s}\mu_{st, ij} = \mu_{t,j} \tab \forall j\in \L_t, \,  st\in \E.
\end{array}
\end{equation}
Vector $\mu\in\Lambda$ is called a {\em relaxed} labeling.\\
%

%
%
%
\mypar{LP-autarky}
We now extend the notion of autarky to relaxed labellings.
\begin{definition} A binary operation $\barwedge\colon \Lambda \times \LL \to \Lambda$, is defined as follows. Let $y \in \LL$ and $\mu \in \Lambda$. Then $\nu = \mu \barwedge y \in \Lambda$ is constructed as:
\begin{subequations}
\begin{align}
\nu_{s,i} &= \left\{
\begin{array}{lr}
\mu_{s,i},\tab & i<y_s, \\
\sum\limits_{i'\geq y_s}\mu_{s,i'},\tab & i=y_s, \\
0, \tab & i>y_s;
\end{array}\right.\\
\nu_{st,ij} &=
\left\{
\begin{array}{lr}
\mu_{st,ij}, & i<y_s, j<y_t, \\
\sum\limits_{i'\geq y_s}\mu_{st,i'j}, & i=y_s, j<y_t, \\
\sum\limits_{j'\geq y_t}\mu_{st,ij'}, & i<y_s, j=y_t, \\
\sum\limits_{\substack{i'\geq y_s \\ j'\geq y_t} }\mu_{st,i'j'}, & i=y_s, j=y_t, \\
0, & i>y_s \mbox{ or } j>y_t.\\
\end{array}
\right.
\end{align}
\end{subequations}
\end{definition}
By construction, the relaxed labeling $\nu$ has non-zero weights only for labels ``below'' $y$:
$\nu_{s,i}=0$ for $i>y_s$ and the same for pairs $st,ij$. Let us check that $\nu \in \Lambda$.
\begin{proof}
Normalization constraint:
\begin{equation}
\sum_{i}\nu_{s,i} = \sum_{i<y_s}\mu_{s,i} + \sum_{i'\geq y_s}\mu_{s,i'} = \sum_{i}\mu_{s,i} = 1.
\end{equation}
Marginalization constraint:
\begin{equation}
\begin{split}
\sum_{i}\nu_{st,ij} &= \left\{\begin{array}{lr}
\sum\limits_{i<y_s}\mu_{st,ij} + \sum\limits_{i'\geq y_s}\mu_{st,i'j}, & j<y_t,\\
\sum\limits_{\substack{i<y_s\\ j'\geq y_t}}\mu_{st,ij} + \sum\limits_{\substack{i'\geq y_s\\ j'\geq y_t }}\mu_{st,i'j'},& j=y_t,\\
0,& j>y_t,
\end{array}\right.\\
&= \nu_{t,j}.
\end{split}
\end{equation}
\end{proof}
Operation $\nu = \mu \barvee y$ is defined completely similarly, having singleton components
\begin{equation}
(\mu \barvee y)_{s,i} = \left\{
\begin{array}{lr}
\mu_{s,i},\tab & i>y_s, \\
\sum\limits_{i'\leq y_s}\mu_{s,i'},\tab & i=y_s, \\
0, \tab & i<y_s.
\end{array}\right.\\
\end{equation}
\begin{definition}We say that a pair $(x^{\rm min}, x^{\rm max})$ is a weak {\em LP-autarky} for $\f$, if
\begin{equation}\label{LP-autarky}
\forall \mu \in \Lambda\tab  \<f,(\mu \barwedge x^{\rm min})\barvee x^{\rm max} \> \leq \<f,\mu\>.
\end{equation}
If additionally for all $\mu$ such that $(\mu \barwedge x^{\rm min})\barvee x^{\rm max}\neq \mu$ the strict inequality holds then we say that it is a strong LP-autarky.
\end{definition}
\mypar{Properties of LP-autarkies}
\begin{statement}Any weak (resp. strong) LP-autarky is a weak (resp. strong) autarky.
\end{statement}
\begin{proof}
By substituting $\mu = \phi(x)$.
\end{proof}
\begin{statement}Checking whether $(x^{\rm min}, x^{\rm max})$ is an LP-autarky for $\f$ can be solved in a polynomial time.
\end{statement}
\begin{proof}
By construction, $(\mu \barwedge x^{\rm min})\barvee x^{\rm max}$ is a linear map in $\mu$, let us denote it $A\mu$. Inequality~\eqref{LP-autarky} holds iff
\begin{equation}
\min_{\mu \in \Lambda} \<f, \mu - A \mu\> \geq 0,
\end{equation}
which is a linear program. To verify whether $A$ is a strong LP-autarky we need to solve
\begin{equation}\label{verify strong LP-autarky}
\begin{array}{l}
\min \<f, \mu - A \mu\> > 0 \\
\st \left\{
\begin{array}{l}
\mu \in \Lambda,\\
\sum\limits_{s}\sum\limits_{x^{\rm min} \leq i \leq x^{\rm max}}\mu_{s,i} < |\V|.
\end{array}
\right.
\end{array}
\end{equation}
\end{proof}
%
%
\par
%
\begin{statement}If $\f$ is submodular, then
\begin{equation}\label{relax-submodular}
\begin{split}
\forall \mu \in \Lambda,\ \forall y\in\LL \\
\<\mu,f\>+\<\phi(y),f\>  \geq \<\mu \barwedge y, f\>+ \<\mu \barvee y, f\>.
\end{split}
\end{equation}
\end{statement}
\begin{proof}
Scalar products in~\eqref{relax-submodular} are composed of sums of singleton terms and pairwise terms. We first show that sums of singleton terms are equal, expanding singleton terms in the right hand side (RHS):
\begin{equation}\label{relax-submod-p1}
\small
\begin{split}
\sum_s\sum_i\big[(\mu \barwedge y)_{s,i} &+ (\mu \barvee y)_{s,i} \big]f_s(i) = \\
\sum_s\sum_{i<y_s} \mu_{s,i} f_s(i) &+ \sum_s\sum_{i'\geq y_s} \mu_{s,i'} f_s(y_s) + \\
\sum_s\sum_{i>y_s} \mu_{s,i} f_s(i) &+ \sum_s\sum_{i'\leq y_s} \mu_{s,i'} f_s(y_s) = \\
\sum_s\sum_{i} \mu_{s,i} f_s(i) &+ \sum_s \Big(\sum_{i'} \mu_{s,i'} \Big) f_s(y_s) = \\
\sum_{s}\sum_{i}\mu_{s,i}f_{s}(i) &+\sum_{s}\sum_{i}\leftbb i{=y_s}\rightbb f_s(y_s).
\end{split}
\end{equation}
Now consider submodularity constraints:
\begin{equation}\label{pair-submodularity}
\begin{split}
&\forall st\in \E, \forall ij\in \L_{st}, \forall y_{st}\in \L_{st}\\
&f_{st}(ij)+f_{st}(y_{st}) \geq f_{st}(ij\wedge y_{st})+f_{st}(ij\vee y_{st}).
\end{split}
\end{equation}
Multiplying this inequality by $\mu_{st,ij}$ and summing over $ij$, we obtain on the LHS:
\begin{equation}\label{relax-submod-p2L}
\begin{split}
\sum_{ij}\mu_{st,ij}f_{st}(ij) + f_{st}(y_{st}) = \\
\sum_{ij}\mu_{st,ij}f_{st}(ij) + \sum_{ij}\leftbb ij{=}y_{st}\rightbb f_{st}(y_{st})
\end{split}
\end{equation}
and on the RHS:
\begin{equation}\label{relax-submod-p2R}
\begin{split}
\sum_{ij}\mu_{st,ij}\big[ f_{st}(ij\wedge y_{st})+f_{st}(ij\vee y_{st}) \big] = \\
\sum_{ij}\big[ (\mu \barwedge y)_{st,ij}+ (\mu \barvee y)_{st,ij} \big] f_{st}(ij),
\end{split}
\end{equation}
where the equality is verified as follows:
\begin{equation}
\begin{split}
&\sum_{ij}\mu_{st,ij} f_{st}(ij\wedge y_{st}) = \\
&\sum\limits_{\substack{i<y_s \\ j<y_t}}\mu_{st,ij} f_{st}(ij) +
\sum\limits_{\substack{i \geq y_s \\ j<y_t}}\mu_{st,ij} f_{st}(y_s,j) + \\
&\sum\limits_{\substack{i < y_s \\ j\geq y_t}}\mu_{st,ij} f_{st}(i,y_t) +
\sum\limits_{\substack{i \geq y_s \\ j\geq y_t}}\mu_{st,ij} f_{st}(y_{st}) =\\
&\sum_{ij}(\mu \barwedge y)_{st,ij} f_{st,ij}.
\end{split}
\end{equation}
The term with $\barvee$ is rewritten similarly. By summing inequalities~\eqref{relax-submod-p2L} $\geq$ \eqref{relax-submod-p2R} over $st\in\E$ and adding equalities~\eqref{relax-submod-p1} of the singleton terms, we get the result.
\end{proof}
%
%
\begin{statement}\label{relax-projection}
Let $\f$ be submodular and $x^*\in\arg\min\limits_{x}\f(x)$. Then $\forall \mu\in \Lambda$
\begin{subequations}\label{relax-projection-eq}
\begin{align}
\label{relax-projection-wedge}
\<\mu \barwedge x^*, f\> \leq \<\mu,f\>,\\
\<\mu \barvee x^*, f\> \leq \<\mu,f\>.
\end{align}
\end{subequations}
\end{statement}
\begin{proof} Let us show~\eqref{relax-projection-wedge}. For submodular problems LP-relaxation~\eqref{lprelax} is tight. Thus for any $\nu\in\Lambda$ there holds $\<\nu,f\> \geq \f(x^*) = \<\phi(x^*),f\>$.
In particular, for $\nu = \mu \barwedge y$ we have $\<\mu \barwedge y,f\> \geq \<\phi(x^*),f\>$, which when combined with~\eqref{relax-submodular} implies the statement.
\end{proof}
\begin{statement}\label{relax-strong-opt}
Let $(x^{\rm min},L)$ be a strong LP-autarky for $\f$, then:
\begin{equation}\label{relax-strong-opt-eq}
\begin{split}
& \forall s\in \V,\,\forall i < x^{\rm min}_s,\,
\forall \mu^*\in \arg\min_{\mu\in \Lambda}\<\mu ,f\>\ \tab \mu^*_{s,i} = 0.
\end{split}
\end{equation}
\end{statement}
\begin{proof}
Let $\mu^*\in \arg\min_{\mu\in \Lambda}\<\mu ,f\>$ and $\mu^*_{s,i}>0$. Then $\mu^* \barvee x^{\rm min} \neq \mu^*$ and $\f(\mu^* \barvee x^{\rm min})< \f(\mu^*)$, which contradicts optimality of $\mu^*$.
\end{proof}
\mypar{Implications for Algorithms~\ref{IK1},\,\ref{IK2}}
We have already seen in statement~\ref{relax-projection} that for a submodular problem~$\g$, taking $y$ as a minimizer (resp. the lowest minimizer) of $\g$ gives a weak (resp. strong) LP-autarky $(y,L)$. 
Let us show that statement~\ref{stat:sufficient_st} extends to LP-autarkies too.
This would imply that autarkies derived by algorithms~\ref{IK1},\,\ref{IK2} are in fact LP-autarkies for $\f=\g+\h$. 
%
%
\begin{statement}Let $\h$ satisfy inequalities~\eqref{sufficient_st}. Then for any $y\in \LL$ such that $y_s\in K_s$, the pair $(y,L)$ is a weak LP-autarky for $\h$.
\end{statement}
\begin{proof}
Let $\mu\in \Lambda$. From inequality~\eqref{sufficient_st:a} we have
\begin{equation}\label{eq123}
\sum_{s}\sum_{i}((\mu \barvee y)_{s,i}- \mu_{s,i}) h_{s,i} \leq 0.
\end{equation}
Multiplying~\eqref{sufficient_st:b} by $\mu_{st,ij}$ and summing over $ij\in \L_{st}$ and over $st\in\E$ we obtain
\begin{equation}\label{eq124}
\sum_{st}\sum_{ij}\big[ (\mu \barvee y)_{st,ij}- \mu_{st,ij} \big] h_{st, ij} \leq 0.
\end{equation}
Adding~\eqref{eq123} and~\eqref{eq124}, we get:
\begin{equation}
\<\mu \barvee y - \mu, h\> \leq 0,
\end{equation}
which is equivalent to~\eqref{LP-autarky}.
\end{proof}

We have shown that algorithms~\ref{IK1},\,\ref{IK2} derive domain constraints in the form of strong LP-autarkies. We know too that optimal solutions of LP-relaxation will obey domain constraints derived via strong LP-autarkies. 
Note, while algorithms~\ref{IK1},\,\ref{IK2} depend on the ordering of the labels, solutions of the LP-relaxation does not.
Hence,

\begin{corollary}
Let $(K_s \subset \L_s \mid s\in \V)$ be a strong domain constraint derived by Algorithms~\ref{IK1},\,\ref{IK2} \wrt any ordering of sets $\L_s$. Then the set of optimal solutions of LP relaxation with and without these domain constraints would coincide.
\end{corollary}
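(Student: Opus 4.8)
The idea is to combine what has just been established — that each run of Algorithm~\ref{IK1} or~\ref{IK2} yields a \emph{strong} LP-autarky of the form $(x^{\rm min},L)$ for $\f$ — with the elementary fact that shrinking a feasible polytope to a subset that still contains every minimizer changes neither the optimal value nor the minimizer set. Fix one of the orderings employed: it produces a submodular $\g$, a residual $\h=\f-\g$ satisfying~\eqref{sufficient_st}, and $x^{\rm min}=\bigwedge\arg\min_x\g(x)$, which satisfies $x^{\rm min}_s\in K_s$ (this is the termination condition of Algorithm~\ref{IK1} and is guaranteed outright for Algorithm~\ref{IK2}). Since $x^{\rm min}$ is the lowest minimizer of the submodular $\g$, the pair $(x^{\rm min},L)$ is a strong LP-autarky for $\g$ (Statement~\ref{relax-projection} and the ensuing remark); by the Statement proved just above, together with $x^{\rm min}_s\in K_s$, it is a weak LP-autarky for $\h$; and by the LP-analogue of Theorem~\ref{T1} — whose proof is verbatim that of Theorem~\ref{T1}, with $\f(\cdot)$ replaced by the linear form $\<f,\cdot\>$ composed with the map appearing in~\eqref{LP-autarky} — it is a strong LP-autarky for $\f$. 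Statement~\ref{relax-strong-opt} then gives $\mu^*_{s,i}=0$ for every $\mu^*\in\arg\min_{\mu\in\Lambda}\<f,\mu\>$ and every $i$ strictly below $x^{\rm min}_s$; that is, every LP-optimal $\mu^*$ is supported on the constraint set delivered by this run.

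\textbf{Assembling the two inclusions.} Intersecting over all orderings used, every LP-optimal $\mu^*$ of~\eqref{lprelax} satisfies $\mu^*_{s,i}=0$ for all $s\in\V$ and all $i\notin K_s$. Put $\Lambda_\K=\{\mu\in\Lambda\mid \mu_{s,i}=0\ \forall s\in\V,\ \forall i\notin K_s\}$. Because $\mu_{s,i}=0$ forces $\mu_{st,ij}=0$ for all $j$ (marginalization together with $\mu_{st,ij}\ge0$), $\Lambda_\K$ is precisely the local polytope of problem~\eqref{energy minimization} with the domains cut down to $K_s$, so $\min_{\mu\in\Lambda_\K}\<f,\mu\>$ is the LP relaxation ``with the domain constraints''. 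Since $\Lambda_\K\subseteq\Lambda$ we have $\min_{\mu\in\Lambda_\K}\<f,\mu\>\ge\min_{\mu\in\Lambda}\<f,\mu\>$; and by the previous paragraph $\arg\min_{\mu\in\Lambda}\<f,\mu\>\subseteq\Lambda_\K$, so every minimizer over $\Lambda$ lies in $\Lambda_\K$ and is therefore also a minimizer over $\Lambda_\K$, which forces the two optimal values to agree. Conversely any minimizer over $\Lambda_\K$ attains this common value and hence is a minimizer over $\Lambda$. Therefore $\arg\min_{\mu\in\Lambda}\<f,\mu\>=\arg\min_{\mu\in\Lambda_\K}\<f,\mu\>$, which is the claim (and in particular the optimal values coincide).

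\textbf{Where care is needed.} There is no substantial obstacle; the one point to handle carefully is the mixing of different orderings. Each run certifies a strong LP-autarky \emph{in its own ordering}, but Statement~\ref{relax-strong-opt} converts each such certificate into a vanishing condition on components of the \emph{same}, ordering-independent set $\arg\min_{\mu\in\Lambda}\<f,\mu\>$ (this is exactly the remark preceding the corollary), so these component-wise conditions may be freely intersected over the orderings. The auxiliary ingredients — that the lowest minimizer of a submodular $\g$ yields a strong rather than merely weak LP-autarky, the LP-analogue of Theorem~\ref{T1}, and the identification of $\Lambda_\K$ with the restricted local polytope — are all routine and follow the patterns already used above.
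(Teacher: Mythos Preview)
Your proof is correct and follows the same route the paper sketches informally just before the corollary: each run of Algorithm~\ref{IK1} or~\ref{IK2} yields a strong LP-autarky for $\f$ (Statement~\ref{relax-projection} for $\g$, the preceding statement for $\h$, combined via the LP-analogue of Theorem~\ref{T1}), Statement~\ref{relax-strong-opt} then forces every LP-optimum to satisfy the resulting domain constraints, and since the LP is ordering-independent these constraints may be intersected over all orderings. The paper does not give a standalone proof of the corollary beyond this outline, so your write-up is in fact more complete --- in particular you make explicit the LP-analogue of Theorem~\ref{T1} and the identification of $\Lambda_\K$ with the restricted local polytope, both of which the paper leaves implicit.
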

We proved that LP relaxation cannot be tightened by algorithms~\ref{IK1},\,\ref{IK2}.
It may only be simplified by eliminating all variables which are guaranteed to be $0$ in every optimal solution. This may be useful in practical methods solving LP relaxation.
\par
%
\par
For problems with two labels, the following relation also holds. Let $\Lambda^* = \arg \min_{\mu\in \Lambda} \<f,\mu\>$. Let 
\begin{equation}
\begin{split}
x^{\rm min}_s = \min\{ i \mid \exists \mu^*\in\Lambda^*\ \mu_{s,i}>0\},\\
x^{\rm max}_s = \max\{ i \mid \exists \mu^*\in\Lambda^*\ \mu_{s,i}>0\},
\end{split}
\end{equation}
then $(x^{\rm min}, x^{\rm max})$ is a strong autarky for $\f$. This is the {\em roof-dual} autarky~\cite{BorosHammer02}. Because for any other autarky derived via~algorithms~\ref{IK1} and~\ref{IK2} statement~\ref{relax-strong-opt} holds, we conclude that roof-dual autarky dominates~algorithms~\ref{IK1} and~\ref{IK2}.

\section{Expansion Move}
Expansion move algorithm~\cite{Boykov01} seeks to improve the current solution $x$ by considering a {\em move}, which for every $s\in\V$ either keeps the current label $x_s$ or changes it to the label $k$.
\begin{myalgorithm}{Expansion-Move~\cite{Boykov01}}\label{alpha-exp}
\item Let $x\in \LL$, let $k \in \L$. The {\em move energy} function $\g(z)$ of binary configuration $z\in \{0,1\}^\V$ is defined by:
\begin{equation}
\begin{array}{l}
g_0 = f_0, \tab g_s(0) = f_s(x_s), \tab g_s(1) = f_s(k),\\
g_{st}(1,1) = f_{st}(k,k), \tab
g_{st}(1,0) = f_{st}(k,x_t), \\
g_{st}(0,1) = f_{st}(x_s,k), \tab
g_{st}(0,0) = f_{st}(x_s,x_t).
\end{array}
\end{equation}
\item Let $z^*\in\arg\min\limits_{z}\g(z)$.\\
\item If $\g(z^*) < \g(0)$, assign $x_s\leftarrow \left\{\begin{array}{ll} x_s, & \IF z_s = 0, \\ k, & \IF z_s = 1.\end{array}\right.$
\end{myalgorithm}
If the above procedure is repeated for all labels $k \in \L$ and no improvement to $x$ is found then $x$ is said to be a fixed point of this method.
\par
In the case $\f$ is a metric energy~\cite{Boykov01}, the move energy~$\g$ is submodular for arbitrary $x$ and step 2 is easy.
\begin{statement}
Let $\f$ be metric~\cite{Boykov01}. Let $(x^{\rm min},L)$ be a strong autarky for $\f$ such that $x^{\rm min}_s \in \{0, L\}$, $\forall s\in\V$.
Then for any fixed point $x$ of the expansion-move algorithm there holds
\begin{equation}
x \geq x^{\rm min}.
\end{equation}
\end{statement}
\begin{proof}
Assume $\exists s\in\V$ such that $\ x_s < x^{\rm min}$. Then $\f(x \vee x^{\rm min}) < \f(x)$ and since $x_s\in\{1,L\}$, it is
\begin{equation}
x_s \vee x^{\rm min}_s = \begin{cases}
x_s, & x^{\rm min}_s=1,\\
L, & x^{\rm min}_s=L,\\
\end{cases}
\end{equation}
which is a valid expansion move from $x$ to label $k=L$, strictly improving the energy.
\end{proof}
In the case when a move energy is not submodular, it can be ``truncated'' to make it submodular while still preserving the property that the move does not increase $\f(x)$~\cite{Rother-05-tapestry}.
Let $\Delta_{st} = g_{st}(1,1)+g_{st}(0,0)- g_{st}(0,1)-g_{st}(1,0)$. Pair $st$ is submodular iff $\Delta_{st}<0$.
\begin{definition}
{\em Truncation} $\g'$ of $\g$ is different from~$\g$ only in non-submodular pairwise components of $\g$, which are set as:
\begin{equation}
\begin{array}{l}
g'_{st,00} = g_{st, 00} - \beta_{st}\Delta_{st},\\
g'_{st, 01} = g_{st, 01} + \alpha_{st}\Delta_{st},\\
g'_{st, 10} = g_{st, 10} + (1-\alpha_{st}-\beta_{st})\Delta_{st},\\
g'_{st, 11} = g_{st, 11},
\end{array}
\end{equation}
where $\alpha_{st}$ and $\beta_{st}$ are free parameters, satisfying $\alpha_{st}\geq 0$, $\beta_{st}\geq 0$, $\alpha_{st}+\beta_{st} \leq 1$. 
\end{definition}
It is easy to verify that $\g'$ 
is submodular, and
\begin{equation}\label{t-alpha-exp-improve}
\g(z) - \g(0) \leq \g'(z) - \g'(0),
\end{equation}
saying that increase in $\g$ is no more than increase in $\g'$ when changing from $0$ to $z$.
\begin{proof}[Proof of~\eqref{t-alpha-exp-improve}.]
By construction of $\g'$, for all $st\in \E$ such that $\Delta_{st}>0$, enumerating all $z_{st}$,
\begin{equation}\label{t-exp:eq2}
\begin{array}{l}
g'_{st,00} - g'_{st,00} = 0,\\
g'_{st,01} - g'_{st,00} = g_{st,01} - g_{st,00} + (\alpha_{st}+\beta_{st})\Delta_{st},\\
g'_{st,10} - g'_{st,00} = g_{st,10} - g_{st,00} + (1-\alpha_{st})\Delta_{st},\\
g'_{st,11} - g'_{st,00} = g_{st,11} - g_{st,00} + \beta_{st}\Delta_{st},
\end{array}
\end{equation}
we see that only positive values are added on RHS. It is also seen that the added positive values do only increase with $\beta_{st}$. This means that the truncation with $\beta_{st}>0$ (let's denote it $g^{\alpha,\beta}$) is {\em never better} than the truncation with $\beta=0$ (let's denote it $g^{\alpha}$): $\forall z $
\begin{equation}\label{t-exp:eq3}
\ \g(z) - \g(0) \leq \g^{\alpha}(z) - \g^{\alpha}(0) \leq \g^{\alpha,\beta}(z) - \g^{\alpha,\beta}(0).
\end{equation}
Similarly, the truncation with $\alpha=0,\beta=1$ ($g^{0,1}$) is not better than the truncation $\g^{\alpha,\beta}$:
\begin{equation}\label{t-exp:eq4}
\g^{\alpha,\beta}(z) - \g^{\alpha,\beta}(0) \leq \g^{0,1}(z) - \g^{0,1}(0).
\end{equation}
This is verified by examining components:
\begin{equation}
\begin{split}
g_{st}^{0,1}(z_{st}) - g^{0,1}_{st}(0) - g^{\alpha,\beta}_{st}(z_{st}) +g^{\alpha,\beta}_{st}(0) = \\
\left\{
\begin{array}{ll}
0, & z_{st}=00,\\
\Delta_{st}(1-(\alpha+\beta)), & z_{st}=01,\\
\Delta_{st}(1-(1-\alpha)), & z_{st}=10,\\
\Delta_{st}(1-\beta), & z_{st}=11,\\
\end{array}\right.\\
\geq 0.
\end{split}
\end{equation}
If $z$ is an improving move for $\g^{0,1}$ then it is also an improving move for any truncation.
\end{proof}
We have the following result about Algorithm~\ref{IK2}:
\begin{statement}Let $(x^{\rm min},L)$ be a strong autarky for $\f$ obtained by Algorithm~\ref{IK2}. 
Let $x$ be a fixed point of the expansion-move algorithm with any truncation rule. 
Then
\begin{equation}
x \geq x^{\rm min}.
\end{equation}
\end{statement}
\begin{proof}
We will prove that the statement holds for truncation $(\alpha=0, \beta=1)$. We need to show that for a move from $x$ to $x\vee x^{\rm min}$ the truncated energy  decreases at least as much as does auxiliary problem built by Alg. 2. This can be verified by inspecting pairwise components for the 4 cases $z_{st} = 00,01,10,11$.
\end{proof}

\section{Conclusion}
We propose a novel representation of methods~\cite{Kovtun03,Kovtun04PHD} as deriving domain constraints via LP-autarkies. This allows for comparison with other methods deriving domain constraints in the same form~\cite{Boros:TR06-probe,Shekhovtsov-07-binary-TR} and establishing relations with common methods of (approximate) optimization. We also believe that ``label domination'' condition proposed by~\cite{Desmet-92-dee} can be interpreted in the same framework, allowing for the theoretical comparison and or for the design of combined methods.
\par
Our results open several directions for improvements. A direct improvement to Alg.~2 can be obtained as follows. Alg.~2 constructs a multi-label auxiliary problem, which is equivalent to a two-label problem (since we know that there is a minimizer with $x^*_s\in \{0,L\}$, $\forall s\in\V$). For two label problems, we also know that the autarky constructed by roof-dual dominates the autarky by truncation, so it will be better to set
\begin{equation}
\begin{array}{l}
d_{st} = \min\limits_{i\neq L, j\neq L}\big[f_{st}(i,j)\\
+\min\big\{ b_{st}-f_{st}(L,j), c_{st}-f_{st}(i,L)\big\} \big]
\end{array}
\end{equation}
and solve for roof-dual using reduction to {\sc maxflow}~\cite{Boros:TR91-maxflow}. This would be a non-submodular auxiliary problem.
\par
We can also attempt to construct auxiliary problem with mixed submodular and supermodular terms as in~\cite{Shekhovtsov-07-binary-TR} or design an algorithm which will propose an autarky in some greedy way and then verify it via solving linear program~\eqref{verify strong LP-autarky}.
%
%
%
\nocite{
Shlezinger76,
Wainwright03nips,
kolmogorov05aistas,
KolmogorovWainwright05Optimality,
Werner-TR-05,
Werner-PAMI07,
Boykov01,
BorosHammer02,
Hammer-84-roof-duality,
Boros:TR06-probe,
Desmet-92-dee,
Boros:TR91-maxflow,
Shekhovtsov-07-binary-TR}

\small
\bibliographystyle{cmpproc}
\bibliography{\bib/max-plus-en,\bib/pseudo-Bool,\bib/kiev-en,\bib/shekhovt}

\end{document}